\DeclareMathOperator*{\argmax}{argmax}
\newtheorem{lem}{Lemma}
\begin{document}
\captionsetup{belowskip=0pt,aboveskip=0pt}

\title{Rate Adaptation in Predictor Antenna Systems}
\author{Hao~Guo,
        Behrooz~Makki, ~\IEEEmembership{Senior~Member,~IEEE},
        and~Tommy~Svensson, ~\IEEEmembership{Senior~Member,~IEEE}

\thanks{This work was supported in part by VINNOVA (Swedish Government Agency for Innovation Systems) within the VINN Excellence Center ChaseOn, and in part by the EC within the H2020 project 5GCAR.}
\thanks{H. Guo and T. Svensson are with the Department of Electrical Engineering, Chalmers University of Technology, 41296 Gothenburg, Sweden (email: hao.guo@chalmers.se; tommy.svensson@chalmers.se).}
\thanks{B. Makki is with the Ericsson Research, 41756 Gothenburg, Sweden (email: behrooz.makki@ericsson.com).}}

\maketitle

\begin{abstract}
Predictor antenna (PA) system is referred to as a system with two sets of antennas on the roof of a vehicle, where the PAs positioned in the front of the vehicle are used to predict the channel state observed by the receive antennas (RAs) that are aligned behind the PAs. This letter studies the performance of PA systems in the presence of the mismatching problem, i.e., when the channel observed by the PA is not exactly the same as the channel experienced by the RA. Particularly, we study the effect of spatial mismatching on the accuracy of channel state information estimation and rate adaption. We derive closed-form expressions for instantaneous throughput, outage probability, and the throughput-optimized rate adaptation. Also, we take the temporal evolution of the channel into account and evaluate the system performance in temporally-correlated conditions. The simulation and analytical results show that, while PA-assisted adaptive rate adaptation leads to considerable performance improvement, the throughput and the outage probability are remarkably affected by the spatial mismatch and temporal correlations.
\end{abstract}

\begin{IEEEkeywords}
Backhaul, channel state information (CSI), mobility, outage probability, predictor antenna, rate adaptation, spatial correlation, temporal correlation, throughput.
\end{IEEEkeywords}

\vspace{-5mm}
\section{Introduction}
\vspace{-2mm}
Vehicle communication is one of the most important use cases in the fifth generation of wireless networks (5G). In 5G, a significant number of users would access wireless networks in vehicles, e.g., in public transportation like trams and trains or private cars, by their smart phones and laptops. Setting a moving relay node (MRN) in vehicles can be one promising solution to provide a high-rate reliable connection between a base station (BS) and the users inside the vehicle \cite{Yutao2013ICMmoving}. 

With an MRN, the channel state information at the transmitter (CSIT), i.e., at the BS, plays an important role on the system performance.  However, the typical CSIT acquisition methods such as Kalman filter-based system, which are mostly designed for static/low speed channels, may not work well for high-speed MRNs. This is because the position of the antennas change quickly and the channel information becomes inaccurate for MRNs. To overcome this issue, \cite{Sternad2012WCNCWusing} proposes the concept of predictor antenna (PA) wherein at least two antennas are deployed on top of the vehicle. The first antenna, which is the PA,  estimates the channel and sends feedback to the BS. Then, the BS uses the CSIT provided by the PA to communicate with a second antenna, which we refer to as receive antenna (RA), when it arrives to the same position as the PA.

In \cite{Sternad2012WCNCWusing,BJ2017ICCWusing}, experimental results are presented for the efficiency of the PA system which validate the feasibility of this concept. Furthermore,  \cite{phan2018WSAadaptive} proves the feasibility of the PA concept in massive multiple-input-multiple-output (MIMO) downlink systems. However, if the RA does not arrive in the same point as the PA, the actual channel for the RA would not be identical to the one experienced by the PA before. Different ways to compensate for this mismatching problem have been proposed in \cite{DT2015ITSMmaking}. On the other hand, the BS can potentially adjust the transmission rate according to the mismatching problem, if the spatial difference between the PA and the RA in successive time slots is known by the BS. However, to the best of our knowledge, there is no work on rate adaptation in PA systems.

In this paper, we study the problem of imperfect CSIT estimation in PA systems. The goal is to maximize the throughput in the presence of imperfect CSIT. We model the mismatching of the PA system as an equivalent non-central Chi-squared fading model and perform the rate adaptation maximizing the throughput. Particularly, we derive closed-form expressions for the throughput, outage probability as well as the throughput-optimized rate adaptation. Finally, we study the system performance in temporally-correlated fading conditions. Our results, which are of interest for different delay-constrained scenarios of vehicle communications, indicate that rate adaptation can effectively compensate for the mismatching problem in PA systems.

\vspace{-5mm}
\section{System Model}
\begin{figure}
\centering
  \includegraphics[width=0.78\columnwidth]{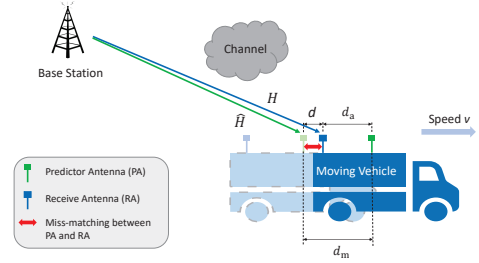}\\
\caption{Predictor antenna system with the mismatching problem.}\label{system}
\vspace{-5mm}
\end{figure}
As seen in Fig. \ref{system}, consider one vehicle deploying two antennas on the roof with one PA positioned in the front of the moving direction and an RA aligned behind the PA. The idea of the data transmission model is that the PA first sends pilots in time $t$, then the BS estimates the channel and sends the data in time $t+\delta$ to the RA. Here,  $\delta$ depends on the processing time at the BS. Here, we assume a time division duplex (TDD) system.

We initially assume that the vehicle moves through a stationary electromagnetic standing wave pattern. Thus, the vehicle experiences a time-invariant pattern. For our analysis of  the effect of temporal correlation, see Section III.A. Due to the movement, after some time the RA will come to the same place as the PA and observe the same channel. In this way, the BS can use the same CSIT which is known from the PA for data transmission to the RA. If the RA reaches exactly the same position as the position of the PA when sending the pilots, it will experience the same channel and the CSIT will be perfect. However, depending on the vehicle speed, the RA may receive the data in a position different from the one in which the PA was sending the pilots. In this case, the channel estimation is not the same as the actual channel in the BS-RA link, which will affect the system throughput correspondingly.

Considering downlink transmission in the BS-RA link, the received signal is given by
\begin{align}\label{eq_Y}
{{Y}} = \sqrt{P}HX + Z.
\end{align}
Here, $P$ represents the average received power at the RA, while $X$ is the input message with unit variance, and $H$ is the fading coefficient between the BS and the RA. Also, $Z \sim \mathcal{CN}(0,1)$ denotes the independent and identically distributed (IID) complex Gaussian noise added at the receiver.

We denote the channel coefficient of the PA-BS uplink as $\hat{H}$ and we assume that $\hat{H}$ is perfectly known by the BS. The result can be extended to the cases with imperfect CSI at the BS (see Section III.A). In this way,  we use the spatial correlation model \cite[p. 2642]{Shin2003TITcapacity}
\begin{align}\label{eq_tildeH}
    \Tilde{\bm{H}} = \bm{\Phi}^{1/2} \bm{H}_{\varepsilon},
\end{align}
where $\Tilde{\bm{H}}$ = $\bigl[ \begin{smallmatrix}
  \hat{H}\\H
\end{smallmatrix} \bigr]$ is the channel matrix including both BS-PA ($\hat{H}$) and BS-RA ($H$) links. $\bm{H}_{\varepsilon}$ has independent circularly-symmetric zero-mean complex Gaussian entries with unit variance, and $\bm{\Phi}$ is the channel correlation matrix.

In general, the spatial correlation of the fading channel depends on the distance between the RA and the PA, which we denote by $d_\text{a}$, as well as the angular spectrum of the radio wave pattern. If we use the classical Jakes' correlation model by assuming uniform angular spectrum, the $(i,j)$-th entry of $\bm{\Phi}$ is given by \cite[eq. (1)]{Chizhik2000CLeffect}
\begin{align}\label{eq_phi}
    \Phi_{i,j} = J_0\left((i-j)\cdot2\pi d/ \lambda\right).
\end{align}
Here, $J_0(\cdot)$ is the zeroth-order Bessel function of the first kind. Also, $\lambda$ represents the wavelength which is equal to $c/f_\text{c}$ where $c$ is the speed of light and $f_\text{c}$ is the carrier frequency. 

In our model, we define $d$ as the effective distance between the position of the PA at time $t$ and the position of the RA at time $t+\delta$, as can be seen in Fig. \ref{system}. That is,  $d$ is given  by
\begin{align}\label{eq_d}
    d = |d_\text{a} - d_\text{m} | = |d_\text{a} - v\delta|,
\end{align}
where $d_\text{m}$ is the moving distance between  $t$ and $t+\delta $ while $v$ is the velocity of the vehicle. To conclude, different values of $v$, $ \delta$, $f_\text{c}$ and $d_\text{a}$ in (\ref{eq_d}) correspond to different values of $d$, which leads to different levels of channel spatial correlation (\ref{eq_tildeH})-(\ref{eq_phi}).

Combining (\ref{eq_tildeH}) and (\ref{eq_phi}) with normalization,  we have
\begin{align}\label{eq_H}
    H = \sqrt{1-\sigma^2} \hat{H} + \sigma q,
\end{align}
where $q \sim \mathcal{CN}(0,1)$ which is independent of the known channel value $\hat{H}\sim \mathcal{CN}(0,1)$, and $\sigma$ is a function of the mis-matching distance $d$.

From (\ref{eq_H}), for a given $\hat{H}$ and $\sigma \neq 0$, $|H|$ follows a Rician distribution, i.e., the probability density function (PDF) of $|H|$ is given by $f_{|H|\big|\hat{H}}(x) = \frac{2x}{\sigma^2}e^{-\frac{x^2+\hat{g}}{\sigma^2}}I_0\left(\frac{2x\sqrt{\hat{g}}}{\sigma^2}\right)$, where $ \hat{g} = (1-\sigma^2) |\hat{H}|^2$ Then, we define the channel gain between BS-RA as $ g = |{H}|^2$. By changing variables from $H$ to $g$, the PDF of $f_{G|\hat{H}}$ is given by
\begin{align}\label{eq_pdf}
    f_{G|\hat{H}}(x) = \frac{1}{\sigma^2}e^{-\frac{x+\hat{g}}{\sigma^2}}I_0\left(\frac{2\sqrt{x\hat{g}}}{\sigma^2}\right),
\end{align}
which is non-central Chi-squared distributed, and the cumulative distribution function (CDF) is
\begin{align}\label{eq_cdf}
    F_{G|\hat{H}}(x) = 1 - \mathcal{Q_M}\left( \sqrt{\frac{2\hat{g}}{\sigma^2}}, \sqrt{\frac{2x}{\sigma^2}}  \right).
\end{align}
Here, $I_n(x) = (\frac{x}{2})^n \sum_{i=0}^{\infty}\frac{(\frac{x}{2})^{2i} }{i!\Gamma(n+i+1)}$ denotes the $n$-th order modified Bessel function of the first kind and $ \mathcal{Q_M}(s,\rho) = \int_{\rho}^{\infty} xe^{-\frac{x^2+s^2}{2}}I_0(sx)\,\text{d}x$ is the Marcum $Q$-function \cite{Bocus2013CLapproximation}.

\vspace{-3mm}
\section{Analytical Results}
We assume that $d_\text{a}$, $\delta $ and $\hat{g}$ are known at the BS. It can be seen from (\ref{eq_pdf}) that $f_{G|\hat{H}}(x)$ is a function of $v$. For a given $v$, the distribution of $g$ is known at the BS, and a rate adaption scheme can be performed to improve the system performance.

The data is transmitted with rate $R^*$ nats-per-channel-use (npcu). If the gain instantaneous realization supports the data rate, i.e., $\log(1+gP)\ge R^*$, the data can be successfully decoded, otherwise outage occurs.  Hence, the outage probability in each time slot is obtained as $\text{Pr}(\text{outage}|\hat{H}) = F_{G|\hat{H}}\left(\frac{e^{R^*}-1}{P}\right)$. Considering slotted communication in block fading channels, where $\Pr(\text{Outage})>0$ depending on the fading model, throughput defined as the data rate times the successful decoding probability, i.e., the expected data rate successfully received by the receiver, is an appropriate performance metric \cite[p. 2631]{Biglieri1998TITfading}\cite[Th. 6]{Verdu1994TITgeneral}\cite[eq. (9)]{Makki2014TCperformance}. Hence, the rate adaptation problem maximizing the outage-constrained throughput in each time slot, with given $v$ and $\hat{g}$, can be expressed as
\begin{align}\label{eq_avgR}
    R_{\text{opt}|\hat{g}}=\argmax_{R^*\geq 0} \left\{ \left(1-\text{Pr}\left(\log(1+gP)<R^*\right)\right)R^* \right\}\nonumber\\
   =\argmax_{R^*\geq 0} \left\{ \left(1-\text{Pr}\left(g<\frac{e^{R^*}-1}{P}\right)\right)R^*\right \}\nonumber\\
    =\argmax_{R^*\geq 0} \left\{ \left(1-F_{G|\hat{H}}\left(\frac{e^{R^*}-1}{P}\right)\right) R^*\right\},
\end{align}
and the expected throughput is obtained by $\mathbb{E}\left\{ R_{\text{opt}|\hat g}\left(1-F_{G|\hat{H}}\left(\frac{e^{R_{\text{opt}|\hat g}}-1}{P}\right)\right) \right\}$ with expectation over $\hat g$. 

Using (\ref{eq_cdf}), (\ref{eq_avgR}) is simplified as
\begin{eqnarray}\label{eq_optR}
    R_{\text{opt}|\hat{g}}=\argmax_{R^*\geq 0} \left\{ \mathcal{Q_M}\left(  \sqrt{\frac{2\hat{g}}{\sigma^2}}, \sqrt{\frac{2(e^{R^*}-1)}{P\sigma^2}} \right)R^* \right\}.
\end{eqnarray}

Equation (\ref{eq_optR}) does not have a closed-form solution. For this reason, Lemma 1 derives an approximation for the optimal data rate maximizing the instantaneous throughput.

\begin{lem}
For a given channel realization $\hat{g}$, the throughput-optimized rate allocation is given by (\ref{eq_appRF}).
\end{lem}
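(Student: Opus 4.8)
The plan is to turn the intractable optimization (\ref{eq_optR}) into a differentiable one by replacing the Marcum $Q$-function with a closed-form surrogate. A natural choice is the exponential-type approximation of the first-order Marcum $Q$-function from \cite{Bocus2013CLapproximation}, which replaces $\mathcal{Q_M}(\alpha,\beta)$ by an expression of the form $e^{-a\beta^2 - b}$ (equivalently, a Gaussian-$Q$ surrogate). Substituting $\alpha=\sqrt{2\hat g/\sigma^2}$ and $\beta^2 = 2(e^{R^*}-1)/(P\sigma^2)$, the objective in (\ref{eq_optR}) becomes, up to a positive multiplicative constant, a smooth function $g(R^*) = R^*\, e^{-a(e^{R^*}-1)}$, where the constant $a$ depends only on $\hat g$, $\sigma$ and $P$ through the chosen approximation.

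First I would establish that $g$ has a unique interior maximizer. We have $g(0)=0$, $g(R^*)>0$ for $R^*>0$, and $g(R^*)\to 0$ as $R^*\to\infty$ because the double-exponential decay dominates the linear growth; together with smoothness this guarantees at least one stationary point on $(0,\infty)$, and concavity of $\ln g(R^*) = \ln R^* - a(e^{R^*}-1)$ makes it unique. Working with $\ln g$ rather than $g$ keeps the algebra light.

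Next I would impose the first-order condition $\frac{\mathrm{d}}{\mathrm{d}R^*}\ln g(R^*)=0$, i.e. $\frac{1}{R^*} = a\,e^{R^*}$, and rearrange it into the canonical form $R^* e^{R^*} = 1/a$. This is solved explicitly by the Lambert $W$ function, $R_{\text{opt}|\hat g}\approx W(1/a)$; re-expressing $1/a$ in terms of $P$, $\sigma$ and $\hat g$ then yields the claimed closed form (\ref{eq_appRF}). A one-line sign check of the second derivative of $\ln g$ confirms the stationary point is a maximum, and the constraint $R^*\ge 0$ is inactive since $W(1/a)>0$ whenever $1/a>0$.

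The main obstacle is the very first step: the final expression is only as clean as the Marcum-$Q$ approximation permits, so the surrogate must be chosen so that (i) its error is small in the SNR and spatial-correlation regime of interest, and (ii) the resulting stationarity equation collapses to the $ye^{y}=z$ form rather than a genuinely transcendental one (this fails, e.g., for approximations in which $\beta$ enters with an exponent other than two). Carrying the $\hat g$-dependence of the constant through the differentiation correctly, and checking a posteriori — numerically against the exact (\ref{eq_optR}) — that the approximation does not shift the location of the optimum, is the part that needs the most care.
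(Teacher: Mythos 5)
Your overall strategy (replace the Marcum $Q$-function by a closed-form surrogate, pass to the log, solve the first-order condition via the Lambert $\mathcal{W}$-function) is the same as the paper's, and your added observation that $\ln R^* - a(e^{R^*}-1)$ is strictly concave, so the stationary point is the unique maximizer, is a point the paper does not bother to verify. However, there is a genuine gap: the surrogate you posit is not the one in the cited reference, and with your surrogate you cannot arrive at the stated formula (\ref{eq_appRF}). The approximation of \cite{Bocus2013CLapproximation} used in the paper is $\mathcal{Q_M}(s,\rho)\simeq e^{-e^{\mathcal{I}(s)}\rho^{\mathcal{J}(s)}}$, where the exponent $\mathcal{J}(s)$ of $\rho$ is a polynomial in $s=\sqrt{2\hat g/\sigma^2}$ and is in general not equal to $2$. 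This is precisely why (\ref{eq_appRF}) has the structure $R_{\text{opt}|\hat g}=\frac{1}{\nu}\mathcal{W}\left(\frac{1}{\omega}\right)$ with $\nu=\mathcal{J}(s)/2$ and with $\nu$ also appearing in the exponent inside $\omega$. Your assumption that the surrogate is of the form $e^{-a\rho^2-b}$ corresponds to forcing $\nu=1$; it yields $R^*e^{R^*}=1/a$ and hence $R^*=\mathcal{W}(1/a)$, which is a different (and $\hat g$-dependence-wise cruder) expression than (\ref{eq_appRF}), so the lemma as stated is not proved.

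Moreover, your caveat (ii) — that the reduction to the canonical $ye^{y}=z$ form ``fails for approximations in which $\beta$ enters with an exponent other than two'' — rules out exactly the case the lemma is about, and it is not true: the missing idea is an additional approximation. With general $\nu$, the first-order condition of $R^*e^{-\omega(e^{R^*}-1)^{\nu}}$ reads $R^*\omega\nu e^{R^*}(e^{R^*}-1)^{\nu-1}=1$, which is indeed not of Lambert type as it stands; the paper then invokes $e^{R^*}-1\simeq e^{R^*}$ (valid at moderate/high rates) to obtain $(R^*\nu)e^{R^*\nu}=1/\omega$, whence $R^*=\frac{1}{\nu}\mathcal{W}\left(\frac{1}{\omega}\right)$. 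To repair your argument you should (i) use the actual two-parameter approximation $e^{-e^{\mathcal{I}(s)}\rho^{\mathcal{J}(s)}}$, carrying the $s$-dependent exponent through the differentiation, and (ii) add the high-rate approximation step; without these, the derivation lands on a formula other than (\ref{eq_appRF}).
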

 
\begin{proof}
We use the approximation of the first-order Marcum $Q$-function \cite[eq. (2), (7)]{Bocus2013CLapproximation}
\begin{align}
    \mathcal{Q_M} (s, \rho) &\simeq e^{\left(-e^{\mathcal{I}(s)}\rho^{\mathcal{J}(s)}\right)}, \nonumber\\
    \mathcal{I}(s)& = -0.840+0.327s-0.740s^2+0.083s^3-0.004s^4,\nonumber\\
    \mathcal{J}(s)& = 2.174-0.592s+0.593s^2-0.092s^3+0.005s^4,
\end{align}

which is accurate at low/moderate values of $s$, to obtain the optimal rate allocation (\ref{eq_optR}) by setting
\begin{align}
    \omega(\sigma, \hat{g}) &= e^{\left(\mathcal{I}\left(\sqrt{\frac{2\hat{g}}{\sigma^2}}\right)\right)}\left(\frac{2}{P\sigma^2}\right)^{\frac{\mathcal{J}\left(\sqrt{\frac{2\hat{g}}{\sigma^2}}\right)}{2}}, \\
    \nu(\sigma, \hat{g}) &= \frac{\mathcal{J}\left(\sqrt{\frac{2\hat{g}}{\sigma^2}}\right)}{2}.
\end{align}
In this way, (\ref{eq_optR}) is approximated as
\begin{align}\label{eq_appR}
    R_{\text{opt}|\hat{g}} \simeq \argmax_{R^*\geq 0} \left\{R^*e^{\left(-\omega(\sigma, \hat{g})\left(e^{R^*}-1\right)^{\nu(\sigma, \hat{g})}\right)}\right\}.
\end{align}
Then, setting the derivative of (\ref{eq_appR}) equal to zero, we obtain the throughput-optimized instantaneous rate as
\begin{align}\label{eq_appRF}
    R_{\text{opt}|\hat{g}} & = \operatorname*{arg}_{R^*\geq 0} \bigg\{ e^{-\omega\left(e^{R^*}-1\right)^{\nu}}\nonumber\\
    & \times\left(1-R^*\omega\nu\left(e^{R^*}-1\right)^{(\nu-1)}e^{R^*}\right)=0\bigg\}\nonumber\\
    & = \operatorname*{arg}_{R^*\geq 0} \left\{ R^*\omega\nu e^{R^*}\left(e^{R^*}-1\right)^{\nu-1}=1 \right\}\nonumber\\
    & \overset{(a)}{\simeq} \operatorname*{arg}_{R^*\geq 0}\left\{R^*\nu e^{R^*\nu}=\frac{1}{\omega}\right\}\nonumber\\
    & \overset{(b)}{=} \frac{1}{\nu(\sigma, \hat{g})}\mathcal{W}\left(\frac{1}{\omega(\sigma, \hat{g})}\right).
\end{align}
Here, $(a)$ comes from $e^{R^*}-1 \simeq e^{R^*} $ which is appropriate at moderate/high values of $R^*$. Also, $(b)$ is obtained by using the Lambert $\mathcal{W}$-function $xe^x = y \Leftrightarrow x = \mathcal{W}(y)$ \cite{corless1996lambertw}. 

\end{proof}

\vspace{-10mm}
\subsection{Temporal Correlation}
Here, we study the system performance in temporally-correlated conditions. Particularly, using the same model as in  \cite[eq. (2)]{Makki2013TCfeedback}, we further develop our channel model  (\ref{eq_H}) as
\begin{align}\label{eq_Htp}
    H_{k+1} = \beta H_{k} + \sqrt{1-\beta^2} z, 
\end{align}
for each time slot $k$, where $z \sim \mathcal{CN}(0,1)$ is a Gaussian noise which is uncorrelated with $H_{k}$. Also, $\beta$ is a known correlation factor which represents two successive channel realizations dependencies by $\beta = \frac{\mathbb{E}\{H_{k+1}H_{k}^*\}}{\mathbb{E}\{|H_k|^2\}}$. Substituting (\ref{eq_Htp}) into (\ref{eq_H}), we have
\begin{align}\label{eq_Ht}
    H_{k+1} = \beta\sqrt{1-\sigma^2}\hat{H}_{k}+\beta\sigma q+\sqrt{1-\beta^2}z.
\end{align}
To simplify the calculation,  $\beta\sigma q + \sqrt{1-\beta^2}z$ is equivalent to a new Gaussian variable $w \sim\mathcal{CN}\left(0,(\beta\sigma)^2+1-\beta^2\right)$. We can follow the same procedure as in (\ref{eq_avgR})-(\ref{eq_appRF}) to analyze the system performance in temporally-correlated conditions. Moreover, we can follow the same approach as in \cite{Wang2007TWCperformance} to add the effect of estimation error of $\hat{H}$ as an independent additive Gaussian variable whose variance is given by the accuracy of CSI estimation. In this way, we can follow the same approach as (\ref{eq_Htp})-(\ref{eq_Ht}) to study the effect of imperfect $\hat{H}$. However, due to space limits, we do not add the detailed analysis here.

\vspace{-3mm}
\section{{Simulation Results}}
In this section, we present the simulation results for the throughput as well as the outage probability in different cases with both spatial and temporal correlations:
\begin{figure}
\vspace{-3mm}
\centering
  \includegraphics[width=0.73\columnwidth]{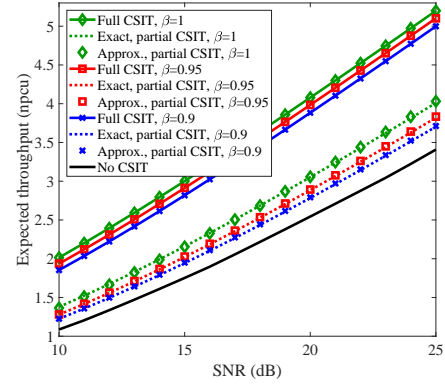}\\
 \vspace{-1mm}
\caption{Expected throughput in different cases, $v$ = 117 km/h.}\label{fig_SNRvsR}
\vspace{-3mm}
\end{figure}

\begin{figure}
\vspace{-1mm}
\centering
  \includegraphics[width=0.73\columnwidth]{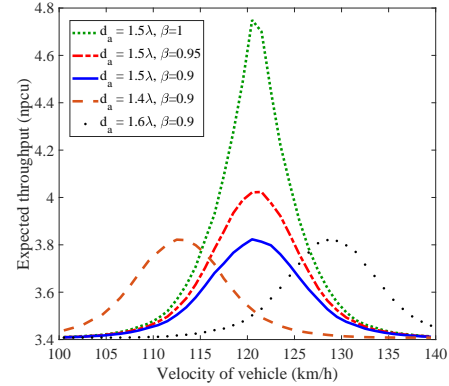}\\
\caption{Expected throughput for different velocities with SNR = 25 dB, in the case of partial CSIT, exact.}\label{fig_VvsR}
\vspace{-5mm}
\end{figure}
\begin{itemize}
    \item \textbf{Full CSIT}:  This ideal case assumes that the BS has perfect CSIT at each position without uncertainty/outage probability, i.e., $H = \beta\sqrt{1-\sigma^2}\hat{H}$, which gives an upper bound of the system performance.
    \item \textbf{Partial CSIT, exact}: This case presents the performance of (\ref{eq_avgR}), where we use the PA to obtain the partial CSIT and perform the rate adaptation. The uncertainties are from  both the spatial mismatching as well as the temporal correlation of the channel.
    \item \textbf{Partial CSIT, approximation}: This case presents the approximation results obtained by (\ref{eq_appRF}).
    \item \textbf{No CSIT}: $\sigma = 1$ in (\ref{eq_H}), i.e., $H \in \mathcal{CN}(0,1)$. In this case no PA is applied and it gives a lower bound of the system performance.
\end{itemize}

In the simulations, we set $ \delta$ = 5 ms, $f_\text{c}$ = 2.68 GHz.  We fix $d_\text{a} = 1.5\lambda$ except for Figs. \ref{fig_VvsR} and \ref{fig_VvsO}. Each point in the figures is obtained by averaging the system performance over $1\times10^5$ channel realizations of $\hat{H}$. In Fig. \ref{fig_SNRvsR}, we plot the expected throughput  in different cases for a broad range of signal-to-noise ratios (SNRs) which, because the noise has unit variance, we define as $10\log_{10}P$. Here, we set $v = 117$ km/h as used in (\ref{eq_d}). The analytical results obtained by Lemma 1, i.e., the approximation of (\ref{eq_avgR}), are also presented for $\beta = 1, 0.95, 0.9$. Also, the figure shows the results with no CSIT, which is independent of temporal correlation, i.e., $\beta$. Fig. \ref{fig_VvsR} evaluates the  expected throughput for a broad range of velocities with different values of $\beta$. Here, the SNR is set to 25 dB. Figs. \ref{fig_SNRvsO} and \ref{fig_VvsO} demonstrate the outage probability versus the SNR as well as the velocity for different values of $\beta$. In Fig. \ref{fig_SNRvsO}, the speed of the vehicle is set to 117 km/h while the SNR in Fig. \ref{fig_VvsO} is set to 10 dB. 

\begin{figure}
\centering
  \includegraphics[width=0.73\columnwidth]{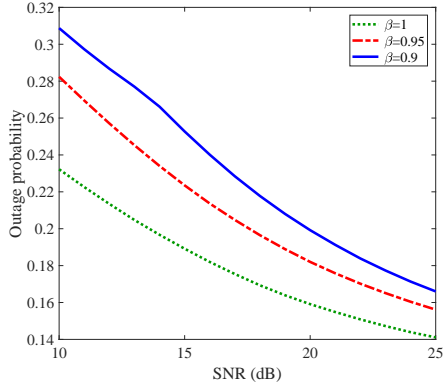}\\
\caption{Outage probability for a broad range of SNRs. $v$ = 117 km/h, in the case of partial CSIT, exact.}\label{fig_SNRvsO}
\vspace{-5mm}
\end{figure}

\begin{figure}
\centering
  \includegraphics[width=0.73\columnwidth]{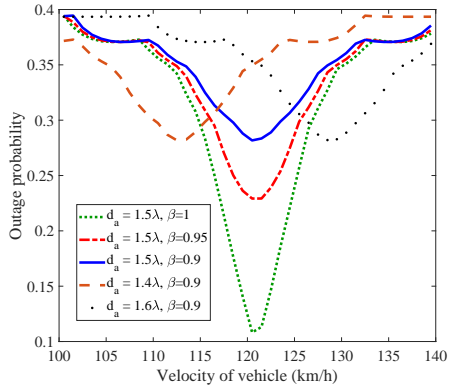}\\
\caption{Outage probability for different velocities with SNR=10 dB, in the case of partial CSIT, exact.}\label{fig_VvsO}
\vspace{-6mm}
\end{figure}

According to the figures, we can conclude the following:
\begin{itemize}
    \item The approximation scheme of Lemma 1 is very tight for a broad range of values of SNR (Fig. \ref{fig_SNRvsR}). Thus, for different parameter settings, the throughput-optimized rate allocation can be well determined by (\ref{eq_appRF}).
    \item With deployment of the PA, remarkable throughput gain is achieved especially in high SNRs. Also, the throughput increases when the temporal correlation is high, i.e., $\beta$ increases. Finally, as can be seen in Fig. \ref{fig_SNRvsO}, the outage probability decreases with SNR as well as $\beta$  while the sensitivity to temporal correlation increases as $\beta$ decreases.
    \item In Figs. \ref{fig_VvsR} and \ref{fig_VvsO},  we study the throughput as well as the outage probability for different speeds $v$. Here, manipulating $v$ is actually equivalent to changing the level of spatial correlation for given values of $ \delta$, $f_\text{c}$ and $d_\text{a}$ in (\ref{eq_d}). Also, we address the effect of different values of $d_\text{a}$ on the system performance. With the setup of Figs. \ref{fig_VvsR} and \ref{fig_VvsO}, the optimal speed, in terms of throughput and outage probability, is almost 120 km/h. This matches the straightforward calculation from (\ref{eq_d}) where the optimal speed without mismatching is 120.9 km/h. Moreover, when $\beta$ decreases, the throughput in Fig. \ref{fig_VvsR} as well as the outage probability in Fig. \ref{fig_VvsO} become less sensitive to the speed. To address the effect of $d_\text{a}$ on the system performance, we present the throughput as well as the outage probability for $d_\text{a}=1.4\lambda$ and $d_\text{a}=1.6\lambda$ in Figs. 3 and 5, respectively. As can be seen, the optimal velocity, in terms of the throughput and the outage probability, increases with $d_\text{a}$.
\end{itemize}

\vspace{-2mm}
\section{Conclusion}
We studied the impact of the mismatching problem in PA systems and developed rate adaptation methods to compensate for its effect. The simulation and analytical results show that, while PA-assisted adaptive rate adaptation leads to considerable performance improvement, the throughput and the outage probability are remarkably affected by the spatial mismatch and degree of temporal correlation.

\vspace{-3mm}
\bibliographystyle{IEEEtran}

\bibliography{Hao_WCL2019-1035.bib}

\begin{thebibliography}{10}
\providecommand{\url}[1]{#1}
\csname url@samestyle\endcsname
\providecommand{\newblock}{\relax}
\providecommand{\bibinfo}[2]{#2}
\providecommand{\BIBentrySTDinterwordspacing}{\spaceskip=0pt\relax}
\providecommand{\BIBentryALTinterwordstretchfactor}{4}
\providecommand{\BIBentryALTinterwordspacing}{\spaceskip=\fontdimen2\font plus
\BIBentryALTinterwordstretchfactor\fontdimen3\font minus
  \fontdimen4\font\relax}
\providecommand{\BIBforeignlanguage}[2]{{%
\expandafter\ifx\csname l@#1\endcsname\relax
\typeout{** WARNING: IEEEtran.bst: No hyphenation pattern has been}%
\typeout{** loaded for the language `#1'. Using the pattern for}%
\typeout{** the default language instead.}%
\else
\language=\csname l@#1\endcsname
\fi
#2}}
\providecommand{\BIBdecl}{\relax}
\BIBdecl

\bibitem{Yutao2013ICMmoving}
S.~Yutao, J.~Vihriala, A.~Papadogiannis, M.~Sternad, W.~Yang, and T.~Svensson,
  ``Moving cells: a promising solution to boost performance for vehicular
  users,'' \emph{IEEE Commun. Mag.}, vol.~51, no.~6, pp. 62--68, Jun. 2013.

\bibitem{Sternad2012WCNCWusing}
M.~Sternad, M.~Grieger, R.~Apelfr\"ojd, T.~Svensson, D.~Aronsson, and A.~B.
  Martinez, ``Using predictor antennas for long-range prediction of fast fading
  for moving relays,'' in \emph{Proc. IEEE Wireless Commun. Netw. Conf.
  Workshops (WCNCW)}, Paris, France, Apr. 2012, pp. 253--257.

\bibitem{BJ2017ICCWusing}
J.~Bj\"orsell, M.~Sternad, and M.~Grieger, ``Using predictor antennas for the
  prediction of small-scale fading provides an order-of-magnitude improvement
  of prediction horizons,'' in \emph{Proc. IEEE Int. Conf. Commun. Workshops
  (ICC Workshops)}, Paris, France, May 2017, pp. 54--60.

\bibitem{phan2018WSAadaptive}
D.-T. Phan-Huy, S.~Wesemann, J.~Bj\"orsell, and M.~Sternad, ``{Adaptive massive
  MIMO for fast moving connected vehicles: It will work with predictor
  antennas!}'' in \emph{Proc. 22nd Int. ITG Workshop Smart Antennas (WSA)},
  Bochum, Germany, Mar. 2018, pp. 1--8.

\bibitem{DT2015ITSMmaking}
D.-T. Phan-Huy, M.~Sternad, and T.~Svensson, ``{Making 5G adaptive antennas
  work for very fast moving vehicles},'' \emph{IEEE Intell. Transp. Syst.
  Mag.}, vol.~7, no.~2, pp. 71--84, Apr. 2015.

\bibitem{Shin2003TITcapacity}
H.~Shin and J.~H. Lee, ``Capacity of multiple-antenna fading channels: spatial
  fading correlation, double scattering, and keyhole,'' \emph{IEEE Trans. Inf.
  Theory}, vol.~49, no.~10, pp. 2636--2647, Oct. 2003.

\bibitem{Chizhik2000CLeffect}
D.~Chizhik, F.~Rashid-Farrokhi, J.~Ling, and A.~Lozano, ``Effect of antenna
  separation on the capacity of {BLAST} in correlated channels,'' \emph{IEEE
  Commun. Lett.}, vol.~4, no.~11, pp. 337--339, Nov. 2000.

\bibitem{Bocus2013CLapproximation}
M.~Z. {Bocus}, C.~P. {Dettmann}, and J.~P. {Coon}, ``{An approximation of the
  first order Marcum $Q$-function with application to network connectivity
  analysis},'' \emph{IEEE Commun. Lett.}, vol.~17, no.~3, pp. 499--502, Mar.
  2013.

\bibitem{Biglieri1998TITfading}
E.~{Biglieri}, J.~{Proakis}, and S.~{Shamai}, ``Fading channels:
  information-theoretic and communications aspects,'' \emph{IEEE Trans. Inf.
  Theory}, vol.~44, no.~6, pp. 2619--2692, Oct. 1998.

\bibitem{Verdu1994TITgeneral}
S.~{Verd\'{u}} and {Te Sun Han}, ``A general formula for channel capacity,''
  \emph{IEEE Trans. Inf. Theory}, vol.~40, no.~4, pp. 1147--1157, Jul. 1994.

\bibitem{Makki2014TCperformance}
B.~{Makki} and T.~{Eriksson}, ``{On the performance of MIMO-ARQ systems with
  channel state information at the receiver},'' \emph{IEEE Trans. Commun.},
  vol.~62, no.~5, pp. 1588--1603, May 2014.

\bibitem{corless1996lambertw}
R.~M. Corless, G.~H. Gonnet, D.~E. Hare, D.~J. Jeffrey, and D.~E. Knuth, ``{On
  the Lambert W function},'' \emph{Adv. Comput. Math.}, vol.~5, no.~1, pp.
  329--359, 1996.

\bibitem{Makki2013TCfeedback}
B.~Makki and T.~Eriksson, ``{Feedback subsampling in temporally-correlated
  slowly-fading channels using quantized CSI},'' \emph{IEEE Trans. Commun.},
  vol.~61, no.~6, pp. 2282--2294, Jun. 2013.

\bibitem{Wang2007TWCperformance}
C.~{Wang}, E.~K.~S. {Au}, R.~D. {Murch}, W.~H. {Mow}, R.~S. {Cheng}, and
  V.~{Lau}, ``{On the performance of the MIMO zero-forcing receiver in the
  presence of channel estimation error},'' \emph{IEEE Trans. Wireless Commun.},
  vol.~6, no.~3, pp. 805--810, Mar. 2007.

\end{thebibliography}

\end{document}